\documentclass[final,onefignum,onetabnum]{siuro250211}

\usepackage{amssymb,latexsym,amsmath}     
\usepackage{epsf}
\usepackage{array}
\usepackage{amssymb,amscd,comment}
\usepackage{float}
\usepackage{enumerate}
\usepackage{geometry}
\usepackage{relsize}%
\usepackage{calc}
\usepackage[mathscr]{eucal}
\usepackage{hhline}
\usepackage{tikz}
\usetikzlibrary{positioning}
\usepackage{enumitem}
\setlist[enumerate]{leftmargin=.5in}
\setlist[itemize]{leftmargin=.5in}
\usepackage{multirow}
\usepackage{tabularx}
\usepackage{soul, color}
\usepackage{amssymb,amstext,amsgen,amsfonts,mathrsfs,verbatim,url,hyperref,mathdots, mathtools}
\usepackage{enumitem}
\usepackage{pgfplots}
\pgfplotsset{compat=1.18}
\usepackage{mathrsfs}
\usepackage{arydshln}
\usepackage{algorithm}
\usepackage{algpseudocode}

\ifpdf
  \DeclareGraphicsExtensions{.eps,.pdf,.png,.jpg}
\else
  \DeclareGraphicsExtensions{.eps}
\fi

\newlist{inparaenum}{enumerate}{2}
\setlist[inparaenum]{nosep}
\setlist[inparaenum,1]{label=\bfseries\alph*.}

\setlength{\dashlinedash}{1pt}
\setlength{\dashlinegap}{4.5pt}
\setlength{\arrayrulewidth}{1pt}

\geometry{left=1in, right=1in, top=1in, bottom=1in}





\newtheorem{Theorem}[theorem]{Theorem}
\newtheorem{remark}[theorem]{Remark}
\newtheorem{conjecture}[theorem]{Conjecture}
\newtheorem{case}{Case}

\def\comment#1{}
\newcommand{\C}{\mathbb{C}}
\newcommand{\R}{\mathbb{R}}
\newcommand{\e}{\mathbf{e}}

\usetikzlibrary{arrows}

\def\invddots{\mathinner{\mskip1mu\raise1pt\vbox{\kern7pt\hbox{.}}\mskip2mu
		\raise4pt\hbox{.}\mskip2mu\raise7pt\hbox{.}\mskip1mu}}


 \author{Mia G. Escobar \thanks{University of Washington - Tacoma, Tacoma, WA (\email{mescob3@uw.edu})} 
 \and Valentin Garcia \thanks{Brown University, Providence, RI (\email{valentin\_garcia@brown.edu})} }

 \dedication{\small\textit{Project advisor: Anastasiia Minenkova}\thanks{University of Hartford, West Hartford, CT (\email{minenkova@hartford.edu})}}

\headers{Early State Exclusion in 7-Qubit Spin Chains}{M. G. Escobar and V. Garcia}
\title{Early State Exclusion in 7-Qubit Spin Chains}


\begin{document}

\maketitle

\begin{abstract}
The existence of infinite families of $N \times N$ Jacobi matrices representing the Hamiltonians of quantum spin chains with and without early state exclusion (ESE) has been shown to exist for any even $N \geq 4$. However, their existence for odd $N \geq 7$ has remained an open problem. In Section 3, we consider a chain of qubits experiencing nearest-neighbor interactions with environmental effects and present infinite families of $7 \times 7$ Jacobi matrices with and without ESE. 

\end{abstract}

\begin{keywords} Jacobi matrices, inverse problems, quantum computing, quantum information, orthogonal polynomials.


\end{keywords}



\section{Introduction}
Perfect state transfer (PST) in one-dimensional quantum spin chains is widely studied for its applications in quantum information processing. The study of PST was pioneered by Bose \cite{B03}, who analyzed chains with open boundary conditions governed by time-independent Hamiltonians. Subsequently, Kay \cite{Kay10} later showed that the Hamiltonian of a chain of $N + 1$ qubits experiencing nearest-neighbor couplings and environmental effects (see Figure \ref{fig:nearestneighbor}) has a matrix representation of the form of a (symmetric) Jacobi matrix $J$ of order $N + 1$. That is,
\begin{equation}
\label{1.1}
J =
\begin{bmatrix}
a_1 & b_1 & & 0\\
b_1 & a_2 & \ddots & \\
& \ddots  & \ddots & b_N \\
0 & & b_N & a_{N + 1}
\end{bmatrix} 
\quad
\text{for $a_i \in \R$ and $b_i > 0$.}
\end{equation}
\par High fidelity quantum state transfer remains a central problem in quantum computation, with potential to transmit information across quantum systems more efficiently (see \cite{B25}). Here, we focus our attention in transporting states among linear chains from the first to last qubit via the time-evolution operator $e^{-iJt}$. Using the canonical basis $\{\e_k\}_{k = 0}^N$ of $\C^{N + 1}$ with $\e_k$ representing the excited state of the $(k + 1)$th qubit and the convention $\e_0 = [1, 0, \dots, 0]^\top$, we consider the dynamical evolution of $e^{-iJt}\e_0$. Under certain conditions on $J$, there exists some time $T > 0$ for which the probability of finding an excited state in the last qubit is one. This brings us to our first formal definition. 
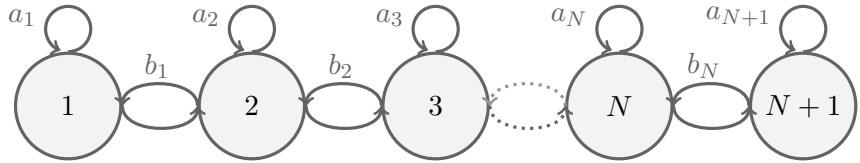
\begin{figure}[ht]
    \begin{center}
    \begin{tikzpicture}[roundnode/.style={circle, draw=gray! 150!, fill=gray!10, very thick, minimum size=14mm},scale=1.5]
        \node[roundnode] (1) {$1$};
        \node[roundnode] (2) [right=of 1]{$2$};
        \node[roundnode] (3) [right=of 2]{$3$};
        \node[roundnode] (N) [right=of 3]{$N$};
        \node[roundnode] (N+1) [right=of N]{$N+1$};
        \draw[->, gray! 150, very thick] (1.east) to[out=-90,in=-90] (2.west) node [above left=9pt]{$b_1$};
        \draw[->, gray! 150, very thick] (2.west) to[out=90,in=90] (1.east) ;
        \draw[->, gray! 150, very thick] (2.east) to[out=-90,in=-90] (3.west)node [above left=9pt]{$b_2$};
        \draw[->, gray! 150, very thick] (3.west) to[out=90,in=90] (2.east) ;
        \draw[->, gray! 150, very thick] (N.east) to[out=-90,in=-90] (N+1.west)node [above left=9pt]{$b_N$};
        \draw[->, gray! 150, very thick] (N+1.west) to[out=90,in=90] (N.east) ;
        \draw[->, dotted, gray! 150, very thick] (3.east) to[out=-90,in=-90] (N.west);
        \draw[->, dotted, gray, very thick] (N.west) to[out=90,in=90] (3.east) ;
        \draw[->, gray! 150, very thick] (1.north)arc(-90:250:0.2) node[above left= 7pt]     {$a_1$}; 
        \draw[->, gray! 150, very thick] (2.north)arc(-90:250:0.2)node[above left= 7pt]    {$a_2$}; 
        \draw[->, gray! 150, very thick] (3.north)arc(-90:250:0.2)node[above left= 7pt]     {$a_3$}; 
        \draw[->, gray! 150, very thick] (N.north)arc(-90:250:0.2)node[above left= 7pt]    {$a_N$}; 
        \draw[->, gray! 150, very thick] (N+1.north)arc(-90:250:0.2)node[above left= 7pt]    {$a_{N+1}$}; 
        \coordinate (A) at (1,1);
    \end{tikzpicture}
    \end{center}
    \caption{Prototypical model of a quantum spin chain with $N + 1$ qubits experiencing nearest-neighbor interactions and environmental effects.}
    \label{fig:nearestneighbor}
\end{figure}
\begin{definition}
\emph{A Jacobi matrix $J$ realizes \textbf{perfect state transfer} (PST) between the end-vertices of the weighted path in Figure \ref{fig:nearestneighbor} at time $T > 0$ if
\begin{equation}
\label{1.2}
e^{-iJT}\e_0 = e^{i\phi}\e_N
\end{equation}
for some phase $\phi \in \R$.}
\end{definition}
\begin{remark}
\emph{From here on, when we state that a matrix $J$ realizes PST, this refers exclusively to PST between the terminal vertices of its associated weighted path graph.}
\end{remark}
\begin{figure}[ht]
    \centering
    \includegraphics[width=0.55\linewidth]{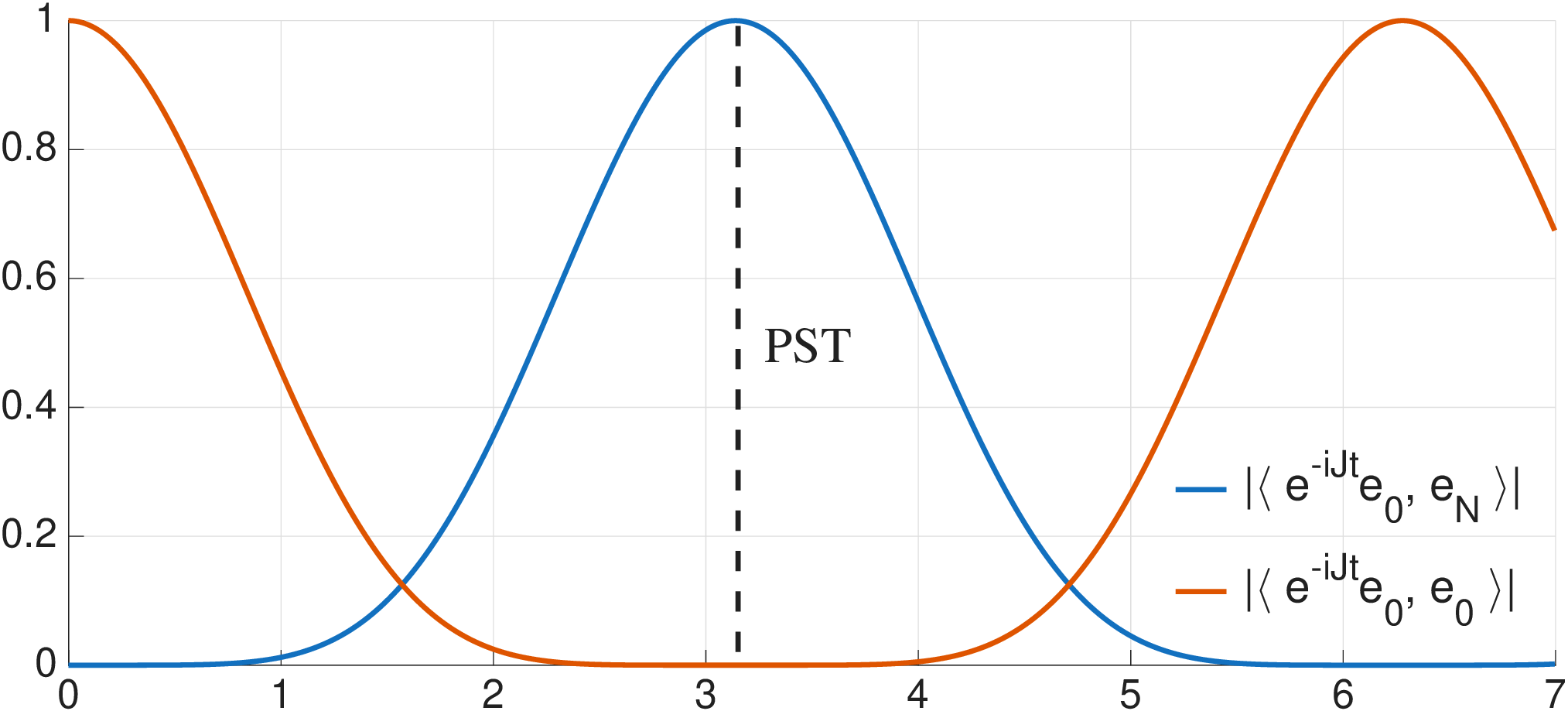}
    \caption{Instance of PST for $J$ with symmetric spectrum $\{0, \pm 1, \pm 2, \pm 3\}$.}
    \label{fig:enter-label}
\end{figure}
\par Kay \cite{Kay10} established that a necessary and sufficient condition for $J$ to realize PST is that $J$ be persymmetric and its ordered eigenvalues $\mu_0 < \mu_1 < \cdots < \mu_N$ satisfy the relation
\begin{equation}
\label{1.3}
\mu_{k + 1} - \mu_k = \frac{(2n_k + 1)\pi}{T} \quad \text{for each $0 \leq k \leq N - 1$,}
\end{equation}
where each $n_k$ is a nonnegative integer. We consider a special case where $J$ has symmetric spectrum in section 3, meaning that for $\{\lambda_k\}_{k = 0}^N$, $\lambda_j = -\lambda_{N - j}$ for all $1 \leq j \leq N - 1$. If we assume that the positive eigenvalues of $J$ are coprime, then \eqref{1.3} implies that $J$ first realizes PST at time $T = \pi$.
\par In some Jacobi matrices exhibiting PST, the probability amplitude of the first qubit has been observed to vanish before the first instance of PST. This phenomenon is referred to as early state exclusion.



\begin{definition}
\label{ese}
\emph{Let $J$ be a Jacobi matrix realizing PST at earliest time $T > 0$. If there exists some $\tau \in (0, T)$ such that
\begin{equation}
\label{1.4}
    \langle e^{-iJ\tau}\e_0, \e_0 \rangle = 0,
\end{equation}
then we say that $J$ exhibits \textbf{early state exclusion} (ESE) at time $\tau$.}
\end{definition}
\begin{figure}[H]
    \centering
    \includegraphics[width=0.5\linewidth]{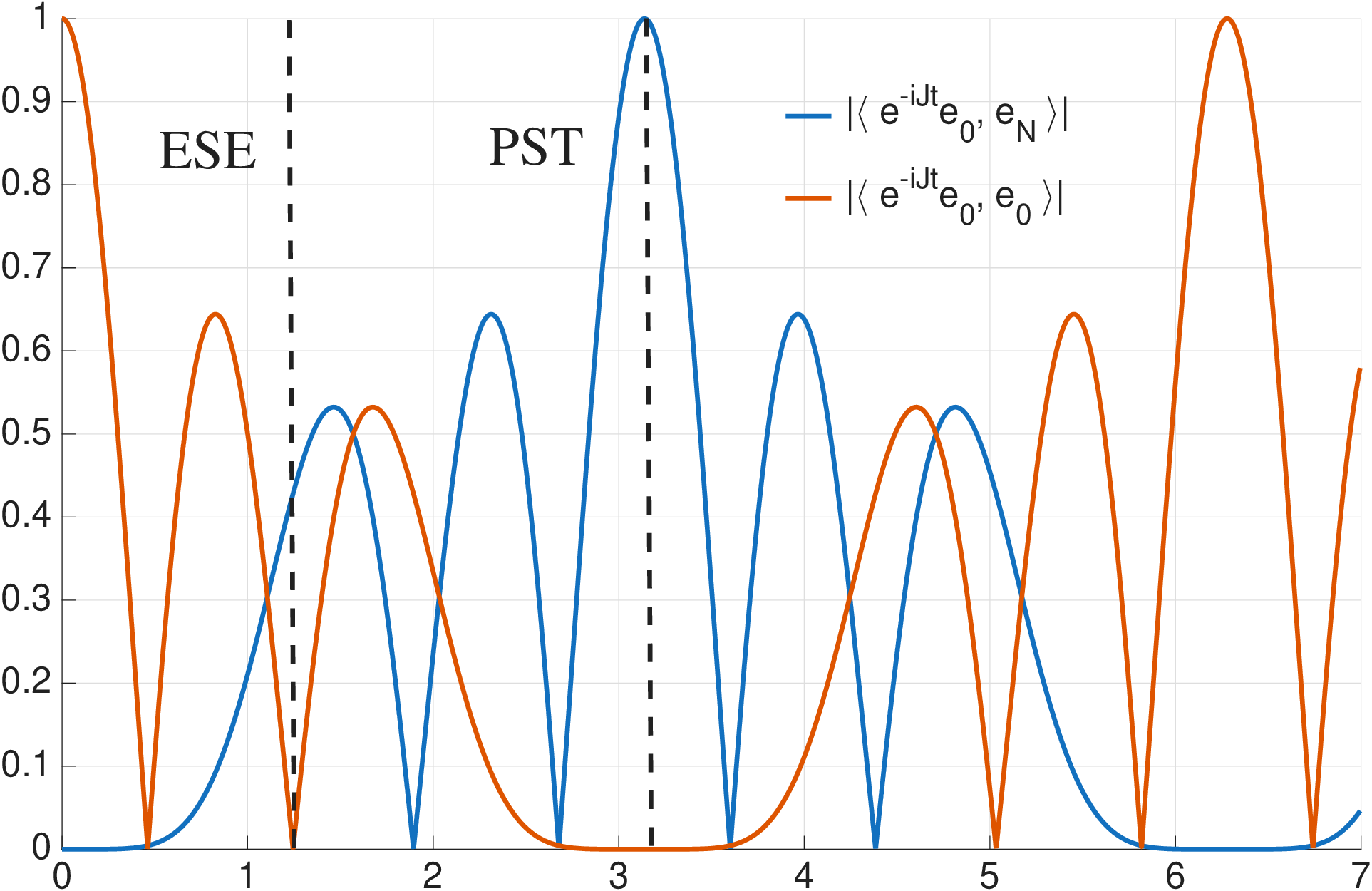}
    \caption{Instance of ESE for $J$ with symmetric spectrum $\{0, \pm 3, \pm 4, \pm 5\}$.}
    \label{fig:enter-label}
\end{figure}
The authors of \cite{ESE} showed that for any even number of qubits, an infinite family of corresponding Jacobi matrices that exhibit early state exclusion exists. For five qubits, infinite families of such matrices--both with and without ESE--have been identified \cite{EM25}. However, the question remains open for higher odd numbers of qubits \cite{B25}. In section 3, we will show that there exist infinite families of Jacobi matrices of order $7$ realizing PST that have ESE and others that have no ESE.
\section{Reconstructing Persymmetric Jacobi Matrices}
\par The Hochstadt uniqueness theorem \cite{H74} states that any persymmetric Jacobi matrix $J$ of order $N + 1$ can be uniquely reconstructed from its spectrum $\{\lambda_j\}_{j = 0}^N$. After introducing some background in orthogonal polynomials, we conclude this section by explicitly computing the entries of a $7 \times 7$ persymmetric Jacobi matrix $J$ with a symmetric spectrum.
\par Let $p_i$ be the characteristic polynomial of the $i$th principal submatrix of the persymmetric Jacobi matrix $J$ defined in \eqref{1.1}. Then the sequence of polynomials $\{p_i\}_{i = 1}^{N + 1}$ satisfies the three-term recurrence relation
\[
p_i(t) = (t - a_i)p_{i - 1}(t) - b_{i - 1}^2p_{i - 2}(t),
\]
where $p_0 \equiv 1$ and $p_{-1} \equiv 0$. We define an inner product $\langle \cdot, \cdot \rangle_w$ on the vector space of polynomials with real coefficients of degree at most $N + 1$ by
\[
\langle f, g \rangle_w = \sum_{j = 0}^{N} f(\lambda_j)g(\lambda_j)w_j,
\]
where the weights $w_j$ are given by
\[
w_j = \prod_{\substack{i = 0 \\ i \neq j}}^{N} \frac{1}{|\lambda_j - \lambda_i|}.
\]
Note that $w_j$ is well-defined as $\lambda_i \neq \lambda_j$ whenever $i \neq j$ (see \cite{H67}). It follows from the recursion relation that $\{p_i\}_{i = 1}^{N + 1}$ is an orthogonal sequence with respect to $\langle \cdot, \cdot \rangle_w$. Hence, the entries of $J$ can be reconstructed via
\[
a_i = \frac{\langle tp_{i - 1}, p_{i - 1} \rangle_w}{\|p_{i - 1}\|_w^2} \quad \text{and} \quad b_i = \frac{\|p_i\|_w}{\|p_{i - 1}\|_w},
\]
where $\|\cdot\|_w$ denotes the norm induced by the inner product. 
\subsection{Symmetric Spectra}
As previously mentioned, we consider Jacobi matrices with symmetric spectra. Under these conditions, a significant simplification of $J$ occurs where $a_1 = \cdots = a_{N + 1} = 0$. This is due to the fact that $\langle tp_{i - 1}, p_{i - 1} \rangle_w = 0$. The algorithm from \cite{BG} can thus be adapted for the case of symmetric spectra in Algorithm \ref{algorithm}.
\begin{algorithm}[H]
\label{algorithm}
\caption{Reconstruction of a Persymmetric Jacobi Matrix with Symmetric Spectrum}
    \begin{algorithmic}[1]
        \Require Prescribed symmetric spectrum $\{\lambda_j\}_{j = 0}^{N}$
        \For{$i = 0, 1, \dots, N$}
        \Comment{Weights}
            \State {$w_i \gets 1/\prod_{j = 0, j \neq i}^{n} |\lambda_j - \lambda_i|$};
        \EndFor
        \Function{$\|f\|_w$}{}\Comment{Polynomial Norm}
            \State \Return {$\sqrt{\sum_{j = 0}^N f^2(\lambda_j)w_j}$};
        \EndFunction
        \State $p_0(t) \gets 1$; \Comment{Initial Conditions}
        \State $p_1(t) \gets t$;  
        \State $b_1 \gets \|p_1\|_w/\|p_0\|_w$;
        \For{$i = 2, \dots, N$} \Comment{Three-term recurrence relation}
            \State $p_i(t) \gets tp_{i - 1}(t) - b_{i - 1}^2p_{i - 2}(t)$;
                \State $b_i \gets \|p_i\|_w/\|p_{i - 1}\|_w$;
            \EndFor
        \State \Return $[b_1, \dots,b_N]$ \Comment{Off-diagonal elements of $J$}
\end{algorithmic} 
\end{algorithm} 
Matching the off-diagonal elements of \eqref{1.1} with the output $[b_1, \dots, b_N]$ from above thereby reconstructs the persymmetric Jacobi matrix $J$.
\subsection{\texorpdfstring{$7 \times 7$}{7 x 7} Persymmetric Jacobi Matrices with Symmetric Spectrum}

For any positive real numbers $0 < x < y < z$, the general form of a $7 \times 7$ persymmetric Jacobi matrix with symmetric spectrum $\{0, \pm x, \pm y, \pm z\}$ is given by
\[
J =
\begin{bmatrix}
0 & b_1 & & & & & 0\\
b_1 & 0 & b_2& & &\\
& b_2 & 0 & b_3 \\
& & b_3 & 0 & b_3 \\
& & & b_3 & 0 & b_2 \\
& & & & b_2 & 0 & b_1 \\
0 & & & & & b_1 & 0
\end{bmatrix},
\]
where the off-diagonal elements are
\[
b_1 = \frac{xz}{\sqrt{x^2-y^2+z^2}}, \quad b_2 = \sqrt{\frac{(y^2-x^2)(z^2-y^2)}{x^2-y^2+z^2}}, \quad \text{and} \quad b_3 = \sqrt{\frac{x^2-y^2+z^2}{2}}.
\]
To compute the inner product in \eqref{1.4}, we label the nonnegative eigenvalues of $J$ as $\lambda_0 = 0$, $\lambda_1 = x$, $\lambda_2 = y$, and $\lambda_3 = z$. Applying Sylvester's formula yields
\[
e^{-iJt} = \sum_{j = 0}^6 e^{-i\lambda_j t} J_j,
\]
where $\lambda_k = -\lambda_{6 - k}$ for $k = 1, 2, 3$ and $J_j$ is the Frobenius covariant
\[
J_j \equiv \prod_{\substack{i = 0 \\ i\neq j}}^6 \frac{J- \lambda_j I}{\lambda_j - \lambda_i}
\]
with $I$ acting as the $7 \times 7$ identity matrix. We consider the inner products
\begin{align*}
    &\langle (J^2 - x^2I)(J^2 - y^2I)(J^2 - z^2I)\e_0, \e_0 \rangle = \frac{(y^2 - x^2)(z^2 - y^2)}{2y^2(x^2 - y^2 + z^2)}\\
   &\langle (J^2 - y^2I)(J^2 - z^2I)(J^2\cos{xt} - xJ \sin{xt})\e_0, \e_0 \rangle = \frac{z^2(z^2 - y^2)\cos{xt}}{2(z^2 - x^2)(x^2 - y^2 + z^2)}\\
    &\langle (J^2 - x^2I)(J^2 - z^2I) (J^2\cos{yt}-yJ \sin{yt})\e_0, \e_0 \rangle = \frac{x^2z^2\cos{yt}}{2y^2(x^2 - y^2 + z^2)}\\
   & \langle (J^2 - x^2I)(J^2 - y^2I)(J^2\cos{zt} -yJ \sin{zt})\e_0, \e_0 \rangle = \frac{x^2(y^2 - x^2)\cos{zt}}{2(z^2 - x^2)(x^2 - y^2 + z^2)},
\end{align*}
which by linearity yield
\begin{align}
\label{2.1}
    \langle e^{-iJt}\e_0, \e_0 \rangle &= \frac{1}{2y^2(z^2 - x^2)(x^2-y^2+z^2)}\Big((y^2 - x^2)(z^2 - x^2)(z^2 - y^2)\\
        \nonumber 
        &\quad + y^2z^2(z^2 - y^2)\cos{xt} + x^2z^2(z^2 - x^2)\cos{yt} + x^2y^2(y^2 - x^2)\cos {zt}\Big).
\end{align}
We can express this equivalently as
\begin{equation}
\label{2.2}
A(t) = \langle e^{-iJt}\e_0, \e_0 \rangle = c_0 +  \sum_{k = 1}^{3} c_k\cos{\lambda_kt},
\end{equation}
where the coefficients above are 
\[
c_k = \prod_{\substack{i = 0 \\ i \neq k}}^3 \frac{1}{|\lambda_k^2 - \lambda_i^2|} \Bigg/\sum_{j = 0}^{3} \prod_{\substack{i = 0 \\ i \neq j}}^3 \frac{1}{|\lambda_j^2 - \lambda_i^2|}.
\]

\section{Infinite Families of \texorpdfstring{$7 \times 7$}{7 x 7} Jacobi Matrices With and Without ESE}

In this section, we will prove the existence of an infinite family of $7 \times 7$ Jacobi matrices without ESE, and an infinite family of $7 \times 7$ Jacobi matrices with ESE. We focus specifically on those with symmetric spectra. 

\begin{remark}
\label{3.1}
\emph{For a spectrum of coprime integers, observe that $A$ has a root at $t = \pi$ with multiplicity $6$. In addition, because of the swapping from cosines to sines during differentiation, the value of the odd derivatives will also always be zero at integer multiples of $\pi$. So, starting with $A^{(6)}(\pi)$, we note that, at least until the 12th derivative, the signs of values of even order derivatives at $\pi$ alternate:
    \begin{align*}
    A^{(6)}(\pi) &= \lambda_1^6c_1 - \lambda^6_2c_2 + \lambda^6_3c_3 > 0\\
    A^{(8)}(\pi) &= -(\lambda_1^8c_1 - \lambda^8_2c_2 + \lambda^8_3c_3) < 0 \\
    A^{(10)}(\pi) &= \lambda_1^{10}c_1 - \lambda^{10}_2c_2 + \lambda^{10}_3c_3>0\\
    A^{(12)}(\pi) &= -(\lambda_1^{12}c_1 - \lambda^{12}_2c_2 + \lambda^{12}_3c_3) < 0.
    \end{align*}
}
\end{remark}

\subsection{Infinite Family of \texorpdfstring{$7 \times 7$}{7 x 7} Jacobi Matrices without ESE}
\begin{Theorem}
\label{Theorem 3.2}
Suppose that $J$ is a Jacobi matrix of order $7$ realizing PST with symmetric spectrum 
\[
\{0, \pm 1, \pm 2m, \pm (2m + 1)\}
\]
for some integer $m \geq 1$. Then $J$ \textbf{does not} have ESE.
\end{Theorem}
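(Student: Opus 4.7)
The plan is to convert the problem into a polynomial positivity question and then resolve that via an explicit factorization plus an amplitude estimate. Substituting $u = \cos t$ and using $\cos(kt) = T_k(u)$ in \eqref{2.2}, one has $A(t) = P(\cos t)$ for a polynomial $P$ of degree $2m+1$ whose coefficients are obtained in closed form from
\[
c_0 = \frac{(2m-1)(4m+1)}{16m^2}, \quad c_1 = \frac{(2m+1)(4m+1)}{16m(m+1)}, \quad c_2 = \frac{2m+1}{16m^2}, \quad c_3 = \frac{2m-1}{16m(m+1)}.
\]
Elementary algebraic simplifications in $m$ show $A(\pi) = A''(\pi) = A^{(4)}(\pi) = 0$ while $A^{(6)}(\pi) > 0$, consistent with Remark~\ref{3.1}. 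Hence $t = \pi$ is a zero of $A$ of exact order $6$, which, since $\cos t + 1 \sim (t-\pi)^2/2$ near $\pi$, translates to an exact triple zero of $P$ at $u = -1$. So $P(u) = (u+1)^3 Q(u)$ with $\deg Q = 2m-2$, equivalently
\begin{equation*}
A(t) = 8\cos^6(t/2)\, Q(\cos t).
\end{equation*}

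Since $\cos^6(t/2) > 0$ on $[0,\pi)$, the absence of ESE is equivalent to $Q(u) > 0$ on $[-1,1]$. The key algebraic lever for that step is the auxiliary identity
\begin{equation*}
A(t) = \cos^2(t/2) - 2\sin(mt)\bigl[c_2\sin(mt) + c_3\sin((m+1)t)\bigr],
\end{equation*}
which I would derive from the coincidences $c_0 + c_2 = c_1 + c_3 = \tfrac12$ (immediate from the closed forms above) together with $1 - \cos(2mt) = 2\sin^2(mt)$ and $\cos t - \cos((2m+1)t) = 2\sin(mt)\sin((m+1)t)$. Expanding the correction using $\sin(nt) = \sin t \cdot U_{n-1}(\cos t)$ and $\sin t = 2\sin(t/2)\cos(t/2)$ pulls out a further $\cos^4(t/2)$, matching the Chebyshev factorization and reducing positivity of $Q$ to an explicit bound on a polynomial in $u$.

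The main obstacle is this final bound, uniformly in $m$. The case $m = 1$ is immediate since there $Q \equiv 1/8$ and $A(t) = \cos^6(t/2)$. For $m \geq 2$, the correction $2\sin(mt)[\cdots]$ has amplitude $O(1/m)$ but oscillates at frequency $\sim m$, so the estimate must be sharp near $t = \pi$, where $\cos^2(t/2)$ is itself small and the cancellation runs through sixth order. I expect to close the proof by combining this amplitude estimate with a degree bound for the trigonometric polynomial $A$: since $A$ has degree $2m+1$, it admits at most $4m+2$ zeros in $[0,2\pi)$, six of which are already forced at $\pi$, and the symmetry $A(2\pi - t) = A(t)$ pairs any zeros in $(0,\pi)$ with ones in $(\pi, 2\pi)$, leaving at most $2m-2$ further zeros in $(0,\pi)$ (counted with multiplicity) to be excluded by the explicit inequality. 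Once this last step is in hand, $A(t) > 0$ on $[0,\pi)$, so no $\tau \in (0,\pi)$ satisfies $\langle e^{-iJ\tau}\e_0, \e_0 \rangle = 0$, and $J$ has no ESE.
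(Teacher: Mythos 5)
Your setup is sound: the closed forms for $c_0,\dots,c_3$ are correct, the coincidences $c_0+c_2=c_1+c_3=\tfrac12$ do hold, and the resulting identity $A(t)=\cos^2(t/2)-2\sin(mt)\bigl[c_2\sin(mt)+c_3\sin((m+1)t)\bigr]$ checks out (it is in fact a product-to-sum repackaging of the paper's own splitting $A=\tfrac{1}{16m^2(1+m)}(f-g)$). But the proof has a genuine gap, and you have named it yourself: the uniform-in-$m$ positivity bound is never actually established. Two specific problems. First, the claim that expanding via $\sin(nt)=\sin t\,U_{n-1}(\cos t)$ "pulls out a further $\cos^4(t/2)$" is not right as stated --- that expansion extracts only one additional factor of $\cos^2(t/2)$ from the correction term, so the bracket you are left with still vanishes at $u=-1$ and the remaining divisibility by $(1+u)^2$ has to be exhibited by further algebra before you even reach a candidate $Q$. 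Second, and more seriously, the zero-counting argument proves nothing beyond $\deg Q=2m-2$: knowing that $A$ has at most $2m-2$ zeros in $(0,\pi)$ counted with multiplicity does not exclude any of them for $m\ge 2$, so the sentence "Once this last step is in hand" is carrying the entire theorem. The hard part is precisely that near $t=\pi$ the leading term $\cos^2(t/2)$ is small and the oscillatory correction, though of amplitude $O(1/m)$, must be beaten through a sixth-order tangency.

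For comparison, the paper's proof spends essentially all of its length on exactly the step you defer. It splits $[0,\pi]$ into three regions: a crude amplitude bound $f>(1+m)(1+2m)(2m)\ge g$ handles $t\le\frac{(2m-1)\pi}{2m}-\frac{1}{\sqrt m}$; a monotonicity argument on the middle interval shows $(f-g)'<0$ there, using a Chebyshev-polynomial lemma bounding $\sin(2ms)-\sin((2m+1)s)$ by $2\sin(s/2)$ (valid for $m\ge 4$, with $m=1,2,3$ done by explicit factorization of $P$ in the variable $x=\cos t$, much in the spirit of your $Q$); and on the final interval $\bigl(\pi-\tfrac{2\pi}{2m+1},\pi\bigr)$ it bounds $A$ from below by its degree-$12$ Taylor polynomial at $\pi$ and locates that polynomial's unique root outside the interval. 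If you want to complete your route, you would need an analogue of at least the last of these: an explicit lower bound on $Q(u)$ for $u$ near $-1$ that survives the $O(1/m)$ amplitude of the oscillatory term. As written, the proposal is an appealing reformulation of the problem, not a proof.
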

\vspace{0.25cm}
\begin{proof}
The earliest time for which $J$ realizes PST is at $T = \pi$, so we can write $A(t)$ as
\begin{align*}
    A(t) &= \frac{1}{16m^2(1 + m)}\big(m(1 + 2m)(1 + 4m)(\cos{t} + 1) + m(2m - 1)(\cos\big((2m+1)t\big) + 1)\\ 
    &\qquad +(1 + m)(1 + 2m)(\cos(2mt) - 1)\big).
\end{align*}
We study two functions $f$ and $g$ such that $A=\frac{1}{16m^2(1 + m)} \big(f-g\big)$, where
\begin{align*}
    &f(t)=m(1+2m)(1+4m)(\cos{t}+1)+m(2m-1)(\cos\big((2m+1)t\big)+1), \text{ and}\\ 
    &g(t) =-(1+m)(1+2m)(\cos(2mt)-1).
\end{align*}
Note that the first derivatives of $f$ and $g$ take the form
\begin{align*}
    f'(t) &= -m(1+2m)(1+4m)\sin{t}-m(2m-1)(1+2m)(\sin\big((2m+1)t\big)),\text{ and} \\
    g'(t) &= (1+m)(1+2m)(2m)\sin(2mt).
\end{align*}
The function $g$ has extrema on $[0,\pi]$ at values $t=\frac{k\pi}{2m}$ where $k=1,2,...2m$. The maximums of $g$ occur when $k$ is odd, which gives $g\big(\frac{k\pi}{2m} \big) = (1+m)(1+2m)(2m)$. For $t\leq \frac{(2m-1)\pi}{2m}-\frac{1}{\sqrt{m}}$, $$f(t)> (1+m)(1+2m)(2m) \geq g(t),$$ which implies $A(t)>0$.

Now, we shift our focus to the interval $\left(\frac{(2m-1)\pi}{2m}-\frac{1}{\sqrt{m}}, \pi- \frac{2\pi}{2m+1} \right)$. If we know that $f>g$ on the endpoints of our interval, then it suffices to show that $g'(t)<f'(t)$ on this interval to prove $A(t)>0$, because $f$ and $g$ would have no intersection. 
\begin{figure}[H]
   \centering
   \includegraphics[width=0.5\linewidth]{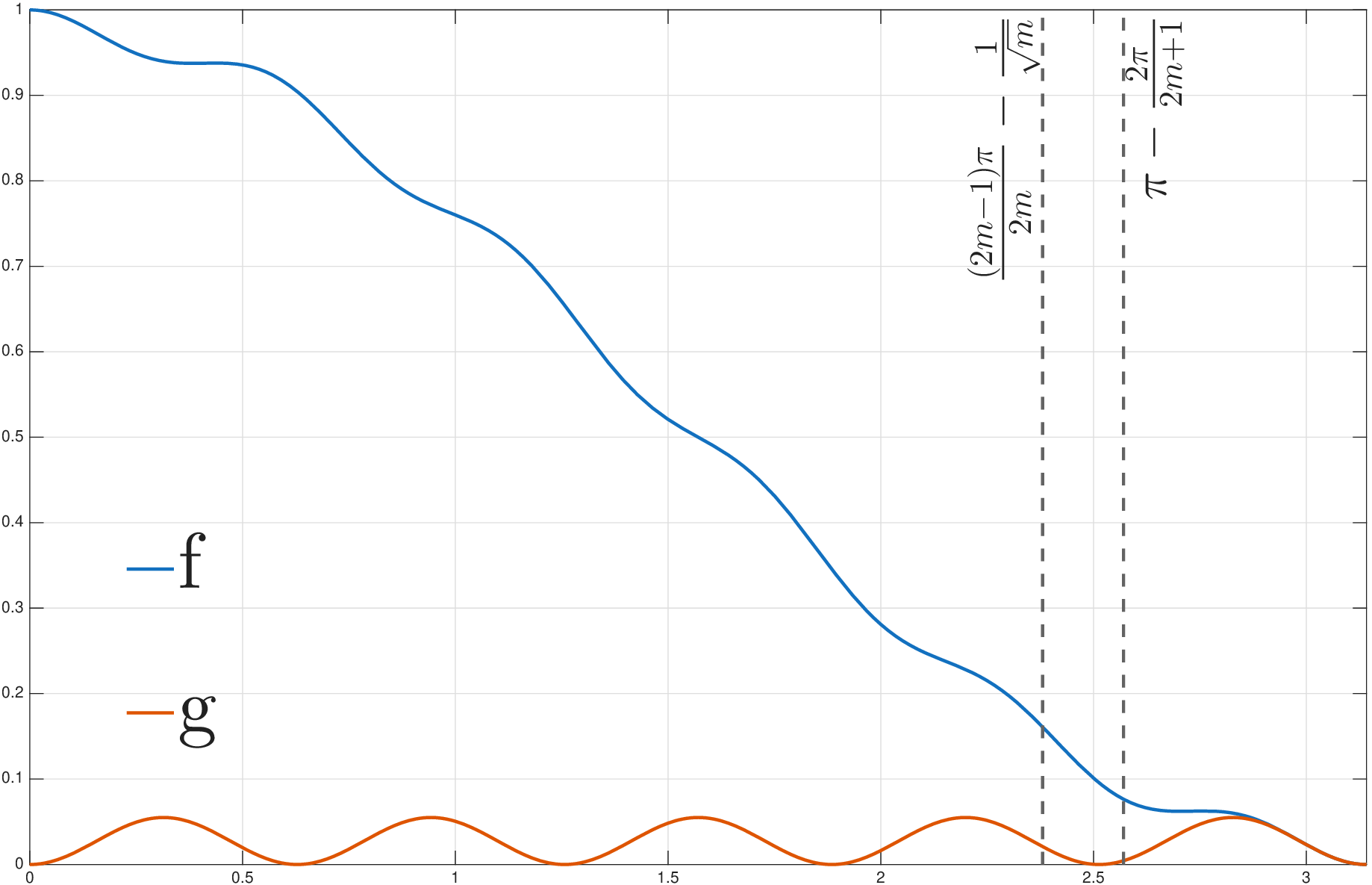}
   \caption{Comparison of normalized $f$ and $g$ for $m=5$.}
   \label{fig:enter-label}
\end{figure}
The graph of $f$ does not cross $y=(1+m)(1+2m)(2m)$ until after $\frac{(2m-1)\pi}{2m}-\frac{1}{\sqrt{m}}$, so we know that $f\left(\frac{(2m-1)\pi}{2m}-\frac{1}{\sqrt{m}}\right)>g\left(\frac{(2m-1)\pi}{2m}-\frac{1}{\sqrt{m}}\right)$. We note the values of $f(t)$ and $g(t)$ at $t=\pi- \frac{2\pi}{2m+1}$ as 
\begin{equation}
  \label{f(pi-2pi/2m+1)}  f\left(\pi- \frac{2\pi}{2m+1}\right)= m(1+2m)(1+4m)\left(\cos \left(\pi- \frac{2\pi}{2m+1}\right)+1\right)\\
  \end{equation}
\begin{equation}
  \label{g(pi-2pi/2m+1)}  g\left(\pi- \frac{2\pi}{2m+1}\right) = -(1+m)(1+2m)\cos \left( \frac{4m\pi}{2m+1}\right)
\end{equation}
Studying \eqref{f(pi-2pi/2m+1)} and \eqref{g(pi-2pi/2m+1)} as functions of $m$, the least value $f\big(\pi- \frac{2\pi}{2m+1}\big)$ reaches is when $m=1$, specifically $f\big(\pi- \frac{2\pi}{3}\big) = 22.5$, after which for $m\geq1$ it continues to increase. The greatest value $g\big(\pi- \frac{2\pi}{2m+1}\big)$ reaches also occurs at $m=1$ when $g\big(\pi- \frac{2\pi}{3}\big)=3$, after which $g\big(\pi- \frac{2\pi}{2m+1}\big)$ decreases. So, $f\big(\pi- \frac{2\pi}{2m+1}\big) >g\big(\pi- \frac{2\pi}{2m+1}\big)$ for all $m$.

Now that we have established for both end points of $\big(\frac{(2m-1)\pi}{2m}-\frac{1}{\sqrt{m}}, \pi- \frac{2\pi}{2m+1} \big)$ that $f>g$, we want to show 
\begin{equation}
\begin{aligned}
\label{difference}
 (f-g)'(t)=-&m(1+2m)(1+4m)\sin{t}-m(2m-1)(1+2m)(\sin\big((2m+1)t\big))-\\
  &(1+m)(1+2m)(2m)\sin(2mt)<0.
\end{aligned}
\end{equation}
We re-parameterize \eqref{difference} by defining $t=\pi-s$ so that we can simplify the presentation of algebraic manipulations and study $s \in \big(\frac{2\pi}{2m+1}, \frac{(2m-1)\pi}{2m}+\frac{1}{\sqrt{m}} \big)$ instead, which results in the equation 
 \begin{equation}
    \begin{aligned}
    \label{f'-g'}
    (f-g)'(\pi-s)=& -m(1+2m)(1+4m)\sin(\pi-s)-m(4m^2-1)\sin((2m+1)(\pi-s)) -\\&(1+m)(1+2m)(2m)\sin(2m(\pi-s))
  =-m(1+2m)((1+4m)\sin(s)+\\&(2m-1)\sin((2m+1)s)-(2m+2)\sin(2ms)).
\end{aligned}
\end{equation}
First, we prove an auxiliary lemma.
\\
\begin{lemma}
\label{sinelemma}
For $m \geq 1$ the following inequality is true,
\[
-2\sin \frac{s}{2}\leq \sin(2ms) -\sin\big((2m+1)s\big) \leq 2\sin \frac{s}{2}.
\]
\end{lemma}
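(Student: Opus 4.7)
The plan is to reduce the two-sided inequality to a single absolute-value bound using the sum-to-product identity
\[
\sin A - \sin B = 2\cos\!\left(\frac{A+B}{2}\right)\sin\!\left(\frac{A-B}{2}\right).
\]
Setting $A = 2ms$ and $B = (2m+1)s$ collapses the left-hand side into a convenient product:
\[
\sin(2ms) - \sin\!\big((2m+1)s\big) \;=\; 2\cos\!\left(\frac{(4m+1)s}{2}\right)\sin\!\left(-\frac{s}{2}\right) \;=\; -2\sin\!\left(\frac{s}{2}\right)\cos\!\left(\frac{(4m+1)s}{2}\right).
\]

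From here I would invoke the trivial bound $\big|\cos((4m+1)s/2)\big| \leq 1$, which immediately yields
\[
\big|\sin(2ms) - \sin\!\big((2m+1)s\big)\big| \;\leq\; 2\left|\sin(s/2)\right|.
\]
To recover the two-sided form displayed in the lemma, I would then observe that on the interval $s \in \big(2\pi/(2m+1),\ (2m-1)\pi/(2m) + 1/\sqrt{m}\big)$ where the lemma is actually invoked in the proof of Theorem~\ref{Theorem 3.2}, one has $s < \pi + 1/\sqrt{m} < 2\pi$ for every $m \geq 1$, so $s/2 \in (0,\pi)$ and hence $\sin(s/2) > 0$. The absolute value on the right then drops, and the symmetric pair of inequalities follows.

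There is no real obstacle here: the entire argument rests on a single trigonometric identity and the boundedness of cosine. The only subtlety worth noting is the sign check on $\sin(s/2)$, since the two-sided inequality as written fails whenever $\sin(s/2) < 0$; the application range in the next step of the larger proof keeps us safely inside $(0,2\pi)$, so this subtlety is harmless.
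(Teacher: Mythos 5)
Your proof is correct, and it is at heart the same computation as the paper's, reached by a more elementary route. The paper invokes the Chebyshev identity $2T_{4m+1} = U_{4m+1} - U_{4m-1}$ together with $T_n(\cos\theta)=\cos(n\theta)$ and $U_n(\cos\theta)=\sin((n+1)\theta)/\sin\theta$, then multiplies through by $2\sin\theta$ and substitutes $\theta = s/2$; unwinding that chain yields exactly your sum-to-product factorization
\[
\sin(2ms)-\sin\big((2m+1)s\big) = -2\sin\tfrac{s}{2}\,\cos\tfrac{(4m+1)s}{2},
\]
followed by the bound $|\cos|\le 1$. So the Chebyshev machinery is the same identity in different clothing, and your version is arguably cleaner. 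One point in your favor: both arguments are only valid as two-sided inequalities when $\sin(s/2)\ge 0$ (the paper's step of multiplying the inequality by $2\sin\theta$ silently assumes $\sin\theta>0$, just as dropping your absolute value does), and you are the only one who checks this, correctly observing that the interval on which the lemma is applied keeps $s/2$ inside $(0,\pi)$. The lemma as stated, with no restriction on $s$, is actually false for, say, $s\in(2\pi,4\pi)$ where $\sin(s/2)<0$; your remark pinpoints why this does not matter for the application.
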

\begin{proof}
Recall that the Chebyshev polynomials of the first and second kind are defined by
\begin{equation}
\label{chebyshevdefinitions}
T_n(\cos\theta) = \cos(n\theta) \quad \text{and} \quad  U_n(\cos\theta) = \frac{\sin\big((n+1)\theta\big)}{\sin\theta},
\end{equation}
   respectively. We know that $$-1\leq -\cos \big((4m+1)\theta\big)\leq 1,$$ so we can express this as $$-1\leq -T_{4m+1}(\cos\theta)\leq 1.$$ By the properties of Chebyshev polynomials, we can write this as $$-1\leq\frac{1}{2}\left(U_{4m-1}(\cos\theta)-U_{4m+1}(\cos\theta)\right)\leq 1.$$ By \eqref{chebyshevdefinitions}, this is equivalent to $$-1\leq \frac{1}{2}\left( \frac{\sin(4m\theta)}{\sin\theta}-\frac{\sin\big((4m+2)\theta\big)}{\sin\theta} \right)\leq 1.$$ Multiplying both sides by $2\sin\theta$ and substituting for $\theta=\frac{s}{2}$, we are left with $$-2\sin \frac{s}{2} \leq \sin(2ms)-\sin\big((2m+1)s\big) \leq 2\sin \frac{s}{2}.$$
 This concludes the proof of Lemma \ref{sinelemma}. 
\end{proof}

By Lemma \ref{sinelemma}, we know that $\sin(2ms)-\sin\big((2m+1)s\big) \leq 2\sin\frac{s}{2}$, which leaves the upper bound for function \eqref{f'-g'} as
\begin{align*}
    (f-g)'(\pi-s) &\leq -m(1+2m)\left( (1+4m)\sin{s}-3\sin\big((2m+1)s\big)-2(2m+2)\sin \frac{s}{2} \right)\\
   &= -m(1+2m)\left( 2(1+4m)\sin\frac{s}{2}\cos\frac{s}{2}-3\sin\big((2m+1)s\big)-2(2m+2)\sin\frac{s}{2}\right)\\
   &\leq -m(1+2m)\left( 2(1+4m)\sin\frac{s}{2}\cos\frac{s}{2}-2(2m+2)\sin \frac{s}{2}-3\right)\\
   &=  -m(1+2m)\left( (4m+4)\sin \left( \frac{s}{2}\right) \left(-1+\frac{2(1+4m)}{4m+4} \cos\left( \frac{s}{2}\right)\right)-3\right)\\
   &\leq -m(1+2m)\left( (4m+4)\left( \frac{s}{2}\right) \left(-1+\frac{2(1+4m)}{4m+4} \left(1-\frac{s^2}{8}\right)\right)-3 \right).
\end{align*}

From this, we conclude that if
\begin{equation}
\label{lhs}
    (4m+4)\left( \frac{s}{2}\right) \left(-1+\frac{2(1+4m)}{4m+4} \left(1-\frac{s^2}{8}\right)\right) > 3
\end{equation}
is true, then $A'(t)<0$ for $t\in \big(\frac{(2m-1)\pi}{2m}-\frac{1}{\sqrt{m}}, \pi- \frac{2\pi}{2m+1} \big)$. Equivalently, to show that \eqref{difference} is true, it is enough to show that the inequality \eqref{lhs} holds. Simplifying further and plugging in our upper boundary condition for $s$, we find that the left hand side of \eqref{lhs} is also bounded below by
\begin{align*}
   &(4m+4)\left(\frac{\pi}{4m}+\frac{1}{2\sqrt{m}} \right) \left(-1+ \frac{(1+4m)(8(2m+1)^2-4\pi^2)}{16(m+1)(2m+1)^2}\right).
\end{align*}
Studying $$\left(\frac{(1+4m)(8(2m+1)^2-4\pi^2)}{16(m+1)(2m+1)^2}\right)$$ as a function of $m$, we can see that it is increasing for $m\geq1$, and is bounded below by $1.5$ for $m \geq 4$. From here on out we consider $m \geq 4$. Then
\begin{align*}
   (4m+4)\left( \frac{s}{2}\right) &\left(-1+\frac{2(1+4m)}{4m+4} \left(-1+\frac{s^2}{8}\right)\right)\geq (4m+4)\left(\frac{\pi}{4m}+\frac{1}{2\sqrt{m}} \right) \left(\frac{1}{2}\right) \\
   &= (2m+2)\left(\frac{\pi}{2m}+\frac{1}{\sqrt{m}} \right)= \frac{\pi}{2} + \frac{\pi}{m}+\sqrt{m} + \frac{1}{\sqrt{m}}>3.
\end{align*}
Therefore, $A'(t)<0$ and $A(t)>0$ for $m \geq 4$.

\par Now that we have shown $A(t)>0$ on $\big(\frac{(2m-1)\pi}{2m}-\frac{1}{\sqrt{m}}, \pi- \frac{2\pi}{2m+1} \big)$ for $m\geq 4$, we can go case by case for $1\leq m \leq 3$. 
\begin{case}
     \emph{When $m=1$, we have an equidistant spectrum, which was shown in \cite{ESE} to never have ESE.}
\end{case}
\begin{case}
    \emph{When $m=2$,  $A(t)= \frac{1}{64}(27 + 30\cos{t} + 5\cos{4t} + 2\cos{5t})$. We can write this in terms of Chebyshev polynomials of the first kind, substituting $T_1(\cos(t))=x$. So, $$\frac{1}{64}(27 + 30x + 5(8x^4 - 8x^2 + 1) + 2(16x^5 - 20x^3 + 5x)) = \frac{1}{8}(x + 1)^3(4x^2 - 7x + 4),$$ which is greater than zero. }
\end{case}
\begin{case}
    \emph{When $m=3$,  $A(t) = \frac{1}{576}(260 + 273\cos{t} + 28\cos{6t} + 15\cos{7t})$. Again, we can write this in terms of Chebyshev polynomials of the first kind, substituting $T_1(\cos(t))=x$, which leaves
    \begin{align*}
        &\frac{1}{576}(260 + 273x + 28(32x^6 - 48x^4 + 18x^2 - 1) + 15(64x^7 - 112x^5 + 56x^3 - 7x)) =\\ 
        &\frac{1}{72}(x + 1)^3(120x^4 - 248x^3 + 174x^2 - 66x + 29),
    \end{align*}
    which is greater than zero.} 
\end{case}
Thus, $A(t)>0$ on $\left( \frac{(2m-1)\pi}{2m}-\frac{1}{\sqrt{m}}, \pi -\frac{2\pi}{2m+1} \right)$, independently of $m$.

Finally, we show that $A(t)>0$ for $t \in \left( \pi -\frac{2\pi}{2m+1}, \pi \right)$. 
\begin{figure}[H]
    \centering
    \includegraphics[width=0.5\linewidth]{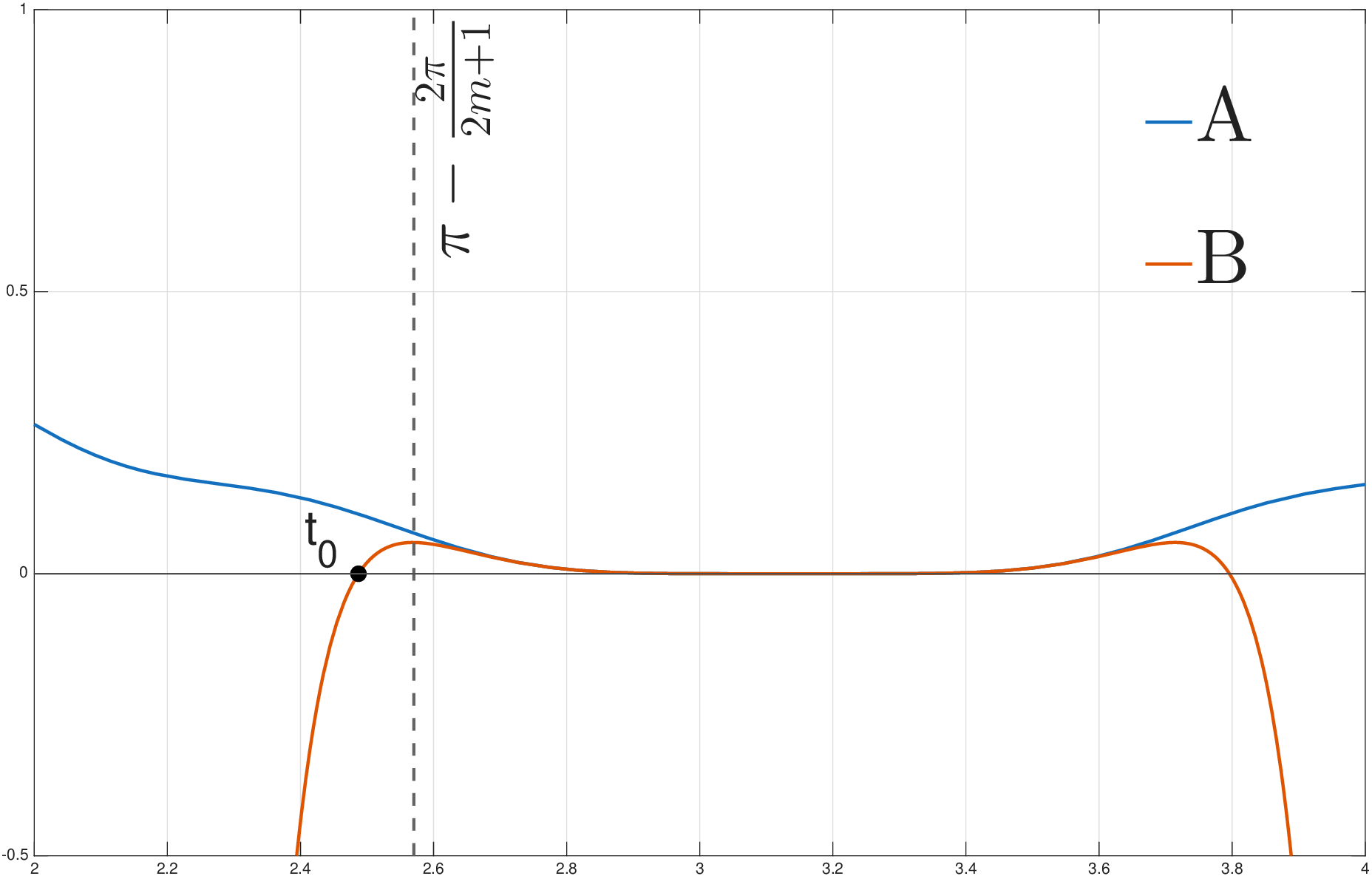}
    \caption{Comparison of $A$ and $B$ for $m=5$.}
    \label{fig:enter-label}
\end{figure}









Let $B(t)$ be the $12$th order Taylor polynomial of $A(t)$ centered at $t = \pi$; i.e.,
\begin{align*}
B(t) &= \frac{A^{(6)}(\pi)}{6!}(t - \pi)^6 + \frac{A^{(8)}(\pi)}{8!}(t - \pi)^8 + \frac{A^{(10)}(\pi)}{10!}(t - \pi)^{10} + \frac{A^{(12)}(\pi)}{12!}(t - \pi)^{12} \\
    &= (t - \pi)^6\left(\frac{A^{(6)}(\pi)}{6!} + \frac{A^{(8)}(\pi)}{8!}(t - \pi)^2 + \frac{A^{(10)}(\pi)}{10!}(t - \pi)^4 + \frac{A^{(12)}(\pi)}{12!}(t - \pi)^6\right).
\end{align*}
Since $A^{(12)}(\pi) < 0$, it follows that $B(t) \leq A(t)$ for every $0 \leq t \leq \pi$. Define the polynomial
\[
R(x) = \frac{A^{(6)}(\pi)}{6!} + \frac{A^{(8)}(\pi)}{8!}x + \frac{A^{(10)}(\pi)}{10!}x^2 + \frac{A^{(12)}(\pi)}{12!}x^3
\]
so that $B(t) = (t - \pi)^6R\big((t - \pi)^2\big)$. 

\par We claim that $R(x)$ has exactly one real root. To see why, notice that its derivative
\[
R'(x) = \frac{A^{(8)}(\pi)}{8!} + \frac{2A^{(10)}(\pi)}{10!}x + \frac{3A^{(12)}(\pi)}{12!}x^2
\]
is a quadratic polynomial with a negative discriminant and a negative leading coefficient. Thus, $R(x)$ is a strictly decreasing cubic polynomial with exactly one real root $x_0 > 0$ since $A(0) = A^{(6)}(\pi)/6! > 0$. This implies that $t_0 = \pi - \sqrt{x_0}$ is the only root of $B(t)$ within the interval $(0, \pi)$. Now as
\begin{align*}
R\left(\left(\frac{2\pi}{2m + 1}\right)^2\right) &= \frac{A^{(6)}(\pi)}{6!} + \frac{(2\pi)^2 A^{(8)}(\pi)}{8!(2m + 1)^2} + \frac{(2\pi)^4A^{(10)}(\pi)}{10!(2m + 1)^4} + \frac{(2\pi)^6A^{(12)}(\pi)}{12!(2m + 1)^6} > 0,
\end{align*}
then it follows that $\big(\frac{2\pi}{2m + 1} \big)^2 < x_0$. Hence $t_0 < \pi - \frac{2\pi}{2m + 1}$, which means that $A(t) > 0$ for all $t \in (0, \pi)$. This completes the proof.
\end{proof}

\subsection{Infinite Family of \texorpdfstring{$7 \times 7$}{7 x 7} Jacobi Matrices with ESE}

\begin{theorem}
\label{Theorem 3.4}
Let $m \geq 1$ be an integer. Then the Jacobi matrix $J$ realizing PST with symmetric spectrum
$
\{0, \pm (2m + 1), \pm (2m + 2), \pm (2m + 3)\}$
has ESE $2m$ times.
\end{theorem}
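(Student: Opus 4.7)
The plan is to count the zeros of $A(t) := \langle e^{-iJt}\e_0, \e_0 \rangle$ on $(0, \pi)$ by producing $2m$ sign changes via evaluation on a uniform grid, and then matching this with an upper bound coming from the polynomial degree of $A$. I would begin by computing the coefficients in \eqref{2.2} for the spectrum $\{0, \pm(2m+1), \pm(2m+2), \pm(2m+3)\}$; a short manipulation of the general formula (valid for any symmetric spectrum $\{0, \pm x, \pm y, \pm z\}$) gives $c_1 + c_3 = \tfrac{1}{2}$, and combined with $A(0) = 1$ this forces $c_0 + c_2 = \tfrac{1}{2}$.

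Next, using the identities $\cos((2m+1)t) = \cos((2m+2)t)\cos t + \sin((2m+2)t)\sin t$ and $\cos((2m+3)t) = \cos((2m+2)t)\cos t - \sin((2m+2)t)\sin t$, I would rewrite
\[
A(t) = c_0 + \bigl[(c_1+c_3)\cos t + c_2\bigr]\cos\bigl((2m+2)t\bigr) + (c_1-c_3)\sin t \sin\bigl((2m+2)t\bigr),
\]
and evaluate at the grid $t_k := k\pi/(2m+2)$, $k = 0, 1, \ldots, 2m+2$. Since $(2m+2)t_k = k\pi$ the sine term vanishes, and $c_1 + c_3 = c_0 + c_2 = \tfrac{1}{2}$ collapses the formula to
\[
A(t_k) = \begin{cases} \tfrac{1}{2}\bigl(1 + \cos\tfrac{k\pi}{2m+2}\bigr), & k \text{ even}, \\[2pt] (c_0-c_2) - \tfrac{1}{2}\cos\tfrac{k\pi}{2m+2}, & k \text{ odd}. \end{cases}
\]
The even-indexed values are positive for $k \in \{0, 2, \ldots, 2m\}$ and $A(t_{2m+2}) = A(\pi) = 0$, recovering PST. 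For odd $k$ the sign of $A(t_k)$ reduces to whether $\cos(k\pi/(2m+2))$ exceeds $\alpha := 2(c_0 - c_2)$.

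The heart of the proof is the pair of inequalities
\[
\cos\tfrac{3\pi}{2m+2} \;<\; -\alpha \;<\; \cos\tfrac{\pi}{2m+2} \qquad \text{for every } m \geq 1,
\]
which would force $A(t_k) < 0$ for precisely the $m$ odd indices $k \in \{1, 3, \ldots, 2m-1\}$, with $A(t_{2m+1}) > 0$. Explicit simplification yields the closed form $-\alpha = (v^2 - 6v + 2)/(v(v+2))$ with $v = 4(m+1)^2$, so both inequalities reduce, via the Taylor bounds $\tfrac{x^2}{2} - \tfrac{x^4}{24} \leq 1 - \cos x \leq \tfrac{x^2}{2}$, to elementary polynomial estimates in $v$; the smallest values of $m$ may need to be handled separately. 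The resulting sign pattern $+, -, +, -, \ldots, +, -, +, +, 0$ exhibits $2m$ sign changes on $[0, \pi]$, so the intermediate value theorem yields at least $2m$ zeros of $A$ in $(0, \pi)$.

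For the matching upper bound, observe that $e^{i(2m+3)t} A(t)$ is a polynomial in $z = e^{it}$ of degree $4m + 6$, so $A$ has at most $4m + 6$ zeros in $[0, 2\pi)$ counted with multiplicity. Combining this with the symmetry $A(2\pi - t) = A(t)$ and the $6$-fold zero of $A$ at $t = \pi$ (as in Remark~\ref{3.1}) leaves at most $2m$ zeros (with multiplicity) in $(0, \pi)$. Matching this with the lower bound forces exactly $2m$ simple zeros, yielding ESE precisely $2m$ times. The principal obstacle is the uniform verification of the cosine inequalities across all $m \geq 1$: for large $m$ they boil down to $\pi^2 < 16$ and $9\pi^2 > 16$ after dropping lower-order terms, but one must assemble small-$m$ and asymptotic regimes with care.
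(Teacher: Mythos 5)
Your proposal is correct and follows essentially the same route as the paper: the lower bound comes from sign alternation of $A$ at the same test points $t_k = k\pi/(2m+2)$ (your $\alpha$-reformulation of the odd-$k$ negativity is just a cleaner packaging of the paper's explicit estimate, and your two cosine inequalities do hold for every $m \geq 1$ with exactly the Taylor bounds you cite, so no separate small-$m$ cases are needed), while your degree count for $e^{i(2m+3)t}A(t)$ in $z = e^{it}$ combined with the order-$6$ zero at $t=\pi$ is equivalent to the paper's substitution $P(\cos t) = A(t)$ with $\deg P = 2m+3$ and a triple root at $x=-1$. If anything, including the extra test point $t_{2m+1}$ makes the count of $2m$ sign changes slightly more explicit than the paper's argument, which stops at $k = 2m-1$.
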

\vspace{0.25cm}
\begin{proof}
\par Consider the polynomial $P$ defined by $P(\cos{t}) \equiv A(t)$. Then $P$ has a root with multiplicity $3$ at $x = -1$ by Remark \ref{3.1}. Since $\deg{P} = 2m + 3$, it follows from the fundamental theorem of algebra that $P$ has at most $2m$ roots in $(-1, 1)$. Hence $J$ has ESE at most $2m$ times by the bijective correspondence between the roots of $A$ over $(0, \pi)$ and the roots of $P$ over $(-1, 1)$.
\par To show that $J$ has ESE at least $2m$ times, it suffices to check by Bolzano's theorem that the sign of $A$ changes at least $2m$ times over $(0, \pi)$ using the test points $t_k = \frac{k\pi}{2m + 2}$ for every integer $0 \leq k \leq 2m - 1$. Note that the explicit form of $A$ can be obtained from \eqref{2.1} as
\begin{align*}
    A(t) &= \frac{1}{32(m + 1)^2(2m^2 + 4m + 3)}\big(2(4m + 5)(4m + 3)+ (4m + 5)(m + 1)(2m + 3)^2\cos{(2m + 1)t} \\
        &\qquad + 2(4m^2 + 8m + 3)^2\cos{(2m + 2)t} + (2m + 1)^2(4m^2 + 7m + 3)\cos{(2m + 3)t}\big).
\end{align*}
If $k$ is even, then $k = 2l$ for some integer $0 \leq l < m$ and hence
\[
A(t_{2l}) = \frac{1}{2}\left(1 +  \cos \frac{l\pi}{m + 1}\right) > 0.
\]
If instead $k$ is odd, then $k = 2l + 1$ for some integer $0 \leq l < m$ and so
\begin{align*}
A(t_{2l + 1}) &= \frac{-1}{8(m + 1)^2(2m^2 + 4m + 3)}\Bigg(8m^4 + 32m^3 + 36m^2 + 8m - 3 \\
    &\qquad + 4(m + 1)^2(2m^2 + 4m + 3)\cos \frac{(2l + 1)\pi}{2m + 2}\Bigg) \\
    &\leq \frac{-1}{8(m + 1)^2(2m^2 + 4m + 3)}\Bigg(8m^4 + 32m^3 + 36m^2 + 8m - 3 \\
    &\qquad - 4(m + 1)^2(2m^2 + 4m + 3)\cos \frac{3\pi}{2m + 2}\Bigg) \\
    &\leq \frac{-1}{8(m + 1)^2(2m^2 + 4m + 3)}\Bigg(8m^4 + 32m^3 + 36m^2 + 8m - 3 \\
    &\qquad - 4(m + 1)^2(2m^2 + 4m + 3)\left(1 - \frac{1}{2}\Big(\frac{3\pi}{2m + 2}\Big)^2 + \frac{1}{24}\Big(\frac{3\pi}{2m + 2}\Big)^4\right)\Bigg) < 0.
\end{align*}
Thus $A(t_k)$ alternates in parity as $k$ alternates in parity, which implies that $A$ has at least $2m$ roots. Combining this with our last result shows that $J$ has ESE exactly $2m$ times.
\end{proof}

\subsection{Rescaling and Translation of Jacobi Matrices}
It is important to note that both PST and ESE remain invariant under scaling and translation of Jacobi matrices. To see why, notice that \eqref{1.3} implies that $J$ realizes PST if and only if for every $\lambda \in \R$ and $c > 0$, the matrix $cJ + \lambda I$ realizes PST. One may also extend the results from Theorem 1.1 in \cite{EM25} to show that ESE also remains invariant under scaling and translation of Jacobi matrices. Thus we can generalize our results from Theorems \ref{Theorem 3.2} and \ref{Theorem 3.4} to encapsulate \emph{centered} spectra of the form $\{\lambda + c\lambda_k\}_{k = 0}^N$ satisfying \eqref{1.3}.

\section{Observations and Future Work}
Through numerical experiments (see Figures \ref{fig:noese147}-\ref{fig:ese349}), we observed that the ratios between the smallest positive eigenvalue of a $7 \times 7$ Jacobi matrix realizing PST and its other positive eigenvalues determined whether or not $J$ experienced ESE. In particular, we arrived at the following conjecture.
\begin{conjecture}
Let $J$ be a $7 \times 7$ Jacobi matrix with symmetric spectrum realizing PST. Then $J$ does not have ESE if and only if its positive eigenvalues are integer multiples of the smallest positive eigenvalue. 
\end{conjecture}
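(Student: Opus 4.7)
The plan is to invoke the rescaling and translation invariance of PST and ESE (Section 3.3) to normalize to spectra of the form $\{0, \pm x, \pm y, \pm z\}$ in which the positive eigenvalues $x < y < z$ are coprime positive integers; the PST condition at $T = \pi$ then forces $x$, $y - x$, and $z - y$ all to be odd. After this reduction, the hypothesis that every positive eigenvalue is an integer multiple of the smallest becomes $x \mid y$ and $x \mid z$, which together with $\gcd(x,y,z) = 1$ collapses to $x = 1$. The conjecture thus becomes: $J$ has no ESE if and only if $x = 1$.

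For the \emph{if} direction ($x = 1$ implies no ESE), I would extend the three-region analysis in the proof of Theorem \ref{Theorem 3.2}, which currently handles only the slice $(y, z) = (2m, 2m+1)$. With $x = 1$,
\[
A(t) = c_0 + c_1\cos t + c_2 \cos(yt) + c_3\cos(zt),
\]
so the plan is to partition $(0,\pi)$ into an initial region on which the low-frequency envelope $c_0 + c_1\cos t$ dominates, a middle region on which $A' < 0$ between positive endpoints, and a terminal region near $\pi$ treated by the Taylor expansion at the end of Theorem \ref{Theorem 3.2}. The role of Lemma \ref{sinelemma} would be played by the product-to-sum identity
\[
\sin(zt) - \sin(yt) = 2\cos \frac{(z+y)t}{2}\,\sin \frac{(z-y)t}{2},
\]
which yields the clean bound $|\sin(zt) - \sin(yt)| \le 2|\sin \frac{(z-y)t}{2}|$ for arbitrary admissible $y, z$. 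The Taylor-expansion step at $\pi$ would carry over essentially verbatim using Remark \ref{3.1}, since only the multiplicity-$6$ root of $A$ at $\pi$ and the alternating signs of its even-order derivatives there are used.

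For the \emph{only if} direction ($x \ge 3$ implies ESE), I would generalize Theorem \ref{Theorem 3.4}. Writing $A(t) = P(\cos t)$ with $\deg P = z$ and $(u+1)^3 \mid P(u)$, the goal is to locate test points $t_k \in (0,\pi)$ at which $A(t_k)$ alternates sign. Since $x, y, z$ need not be three consecutive integers, the arithmetic-progression choice $t_k = k\pi/(2m+2)$ no longer produces closed-form evaluations; I would instead choose $t_k$ adapted to a common divisor of the gaps $y - x$ and $z - y$, so that each of $\cos(xt_k)$, $\cos(yt_k)$, $\cos(zt_k)$ attains a controllable value, and then mimic the cosine-series bound used in the odd-$k$ estimate of Theorem \ref{Theorem 3.4} to establish negativity at the odd test points and positivity at the even ones.

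The main obstacle will be the \emph{if} direction: the proof of Theorem \ref{Theorem 3.2} is already tight for $z = y + 1$, and when $z - y$ is large the high-frequency term $\cos(zt)$ produces many additional local extrema of $A$ inside $(0,\pi)$ that a naive three-region estimate cannot control. Obtaining a uniform positivity bound across all admissible $(y, z)$ with $x = 1$ will likely require replacing the piecewise trigonometric estimates with a global root-counting argument applied to $P$ via its factorization $(u+1)^3\widetilde{P}(u)$, showing that the remaining $z - 3$ roots of $P$ lie outside $(-1, 1)$ precisely when $x = 1$.
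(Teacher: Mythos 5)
The statement you are addressing is presented in the paper as a \emph{conjecture}: the authors give no proof, support it only by numerical experiments, and explicitly state that they believe the techniques of Theorems \ref{Theorem 3.2} and \ref{Theorem 3.4} cannot be replicated for it. So there is no argument of the paper to compare yours against, and your submission must stand on its own. As it stands it is a research plan, not a proof, and both implications contain genuine gaps. The one step that is complete and correct is the normalization: the PST condition forces the gaps $x$, $y-x$, $z-y$ to be odd multiples of a common unit, so after rescaling one may take $0<x<y<z$ to be coprime integers, and then ``$y$ and $z$ are integer multiples of $x$'' is equivalent to $x=1$. That reduction is worth keeping, but it is where the rigorous content of the proposal ends.

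For the ``if'' direction ($x=1\Rightarrow$ no ESE), every quantitative ingredient of Theorem \ref{Theorem 3.2} is tied to the adjacency $z=y+1$: the breakpoints $\frac{(2m-1)\pi}{2m}-\frac{1}{\sqrt{m}}$ and $\pi-\frac{2\pi}{2m+1}$, the closed form of $A$, Lemma \ref{sinelemma} (whose conclusion is useful only because the two frequencies differ by $1$; your substitute bound $|\sin(zt)-\sin(yt)|\le 2\left|\sin\frac{(z-y)t}{2}\right|$ has a right-hand side that already reaches $2$ once $z-y\ge 3$, so it no longer gives a usable comparison against the low-frequency term), and the final inequality $R\big((\tfrac{2\pi}{2m+1})^2\big)>0$. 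Moreover, the step ``$B(t)\le A(t)$ carries over verbatim'' is too quick: the inequality rests on an alternating-tail estimate whose terms decrease only for $|t-\pi|\lesssim 1/z$, so the degree-$12$ Taylor polynomial certifies positivity only on a window of width $O(1/z)$ about $\pi$, while for spectra such as $\{0,\pm1,\pm2,\pm z\}$ with $z$ large the function $A$ has on the order of $z$ local extrema in $(0,\pi)$ and the entire burden falls on the unproved middle region. For the ``only if'' direction, note first that you do not need $2m$ sign changes---a single value $t^*\in(0,\pi)$ with $A(t^*)<0$ gives a root by the intermediate value theorem since $A(0)=1$---but the choice of test point is the whole content of the argument and you leave it unspecified. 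The natural candidate $t=\pi/x$ does not obviously work: writing $D=2y^2(z^2-x^2)(x^2-y^2+z^2)$, one computes $c_0-c_1=-x^2(z^2-y^2)(y^2+z^2-x^2)/D$ while $c_2+c_3=x^2\big(z^2(z^2-x^2)+y^2(y^2-x^2)\big)/D$, and $|c_0-c_1|<c_2+c_3$, so the triangle inequality does not force $A(\pi/x)<0$. Your closing paragraph candidly concedes the central difficulty; as written, the proposal establishes neither implication, which is consistent with the paper leaving the statement open.
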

We believe that the techniques used to prove Theorems \ref{Theorem 3.2} and \ref{Theorem 3.4} cannot be replicated here. 
More generally, we can extend this conjecture for an arbitrary Jacobi matrix of odd order with symmetric spectrum. 

\begin{figure}[H] \label{fig:noese147}
    \centering
    \includegraphics[width=0.35\linewidth]{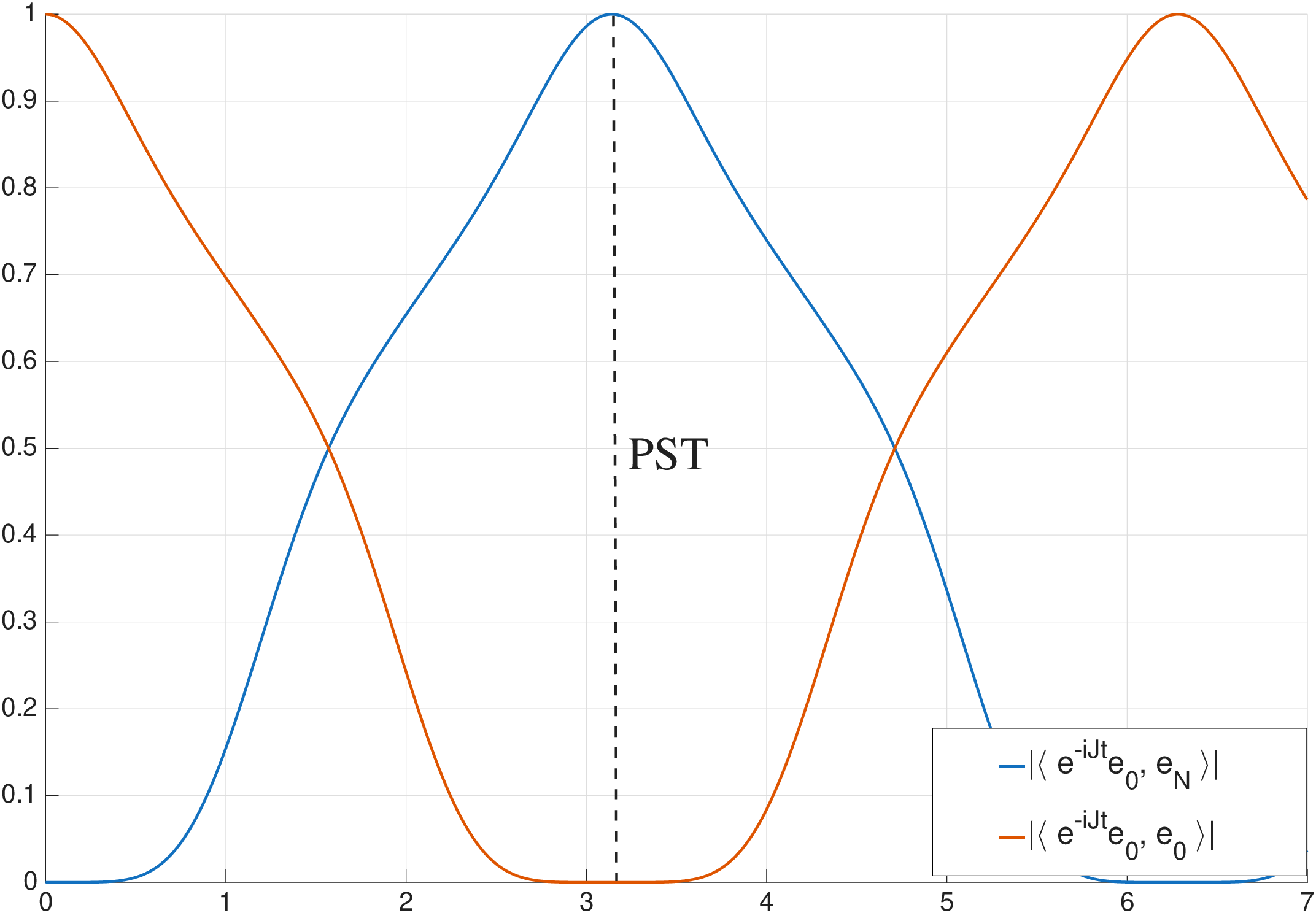}
    \caption{Instance of ESE for $J$ with spectrum $\{0, \pm 1, \pm 4, \pm 7\}$.}
    \label{fig:enter-label}
\end{figure}

\begin{figure}[H]
\label{147}
    \centering
    \includegraphics[width=0.35\linewidth]{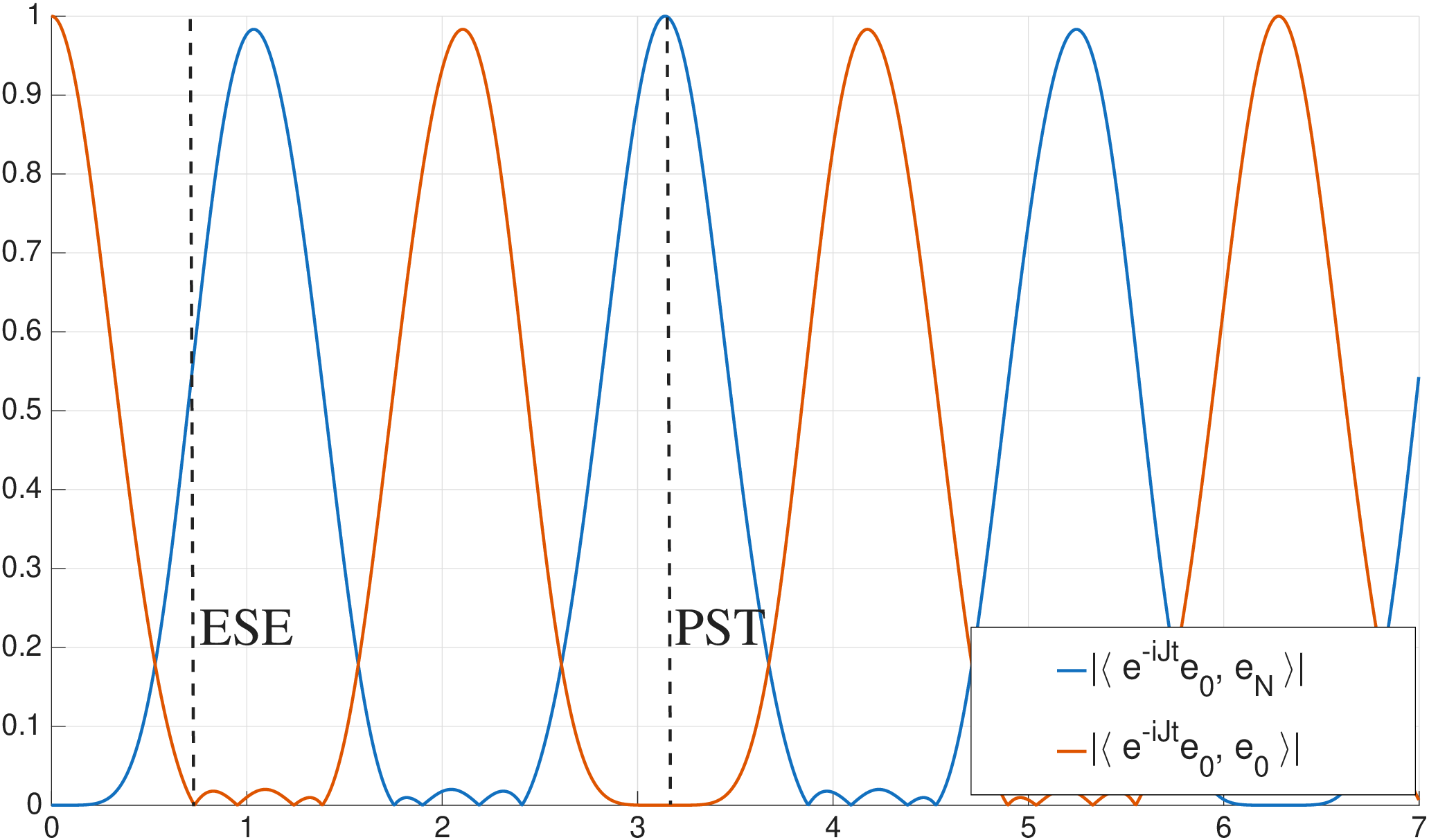}
    \caption{Instance of ESE for $J$ with spectrum $\{0, \pm 3, \pm 6, \pm 11\}$.}
    \label{fig:enter-label}
\end{figure}

\begin{figure}[H]\label{fig:ese349}
\label{3611}
    \centering
    \includegraphics[width=0.35\linewidth]{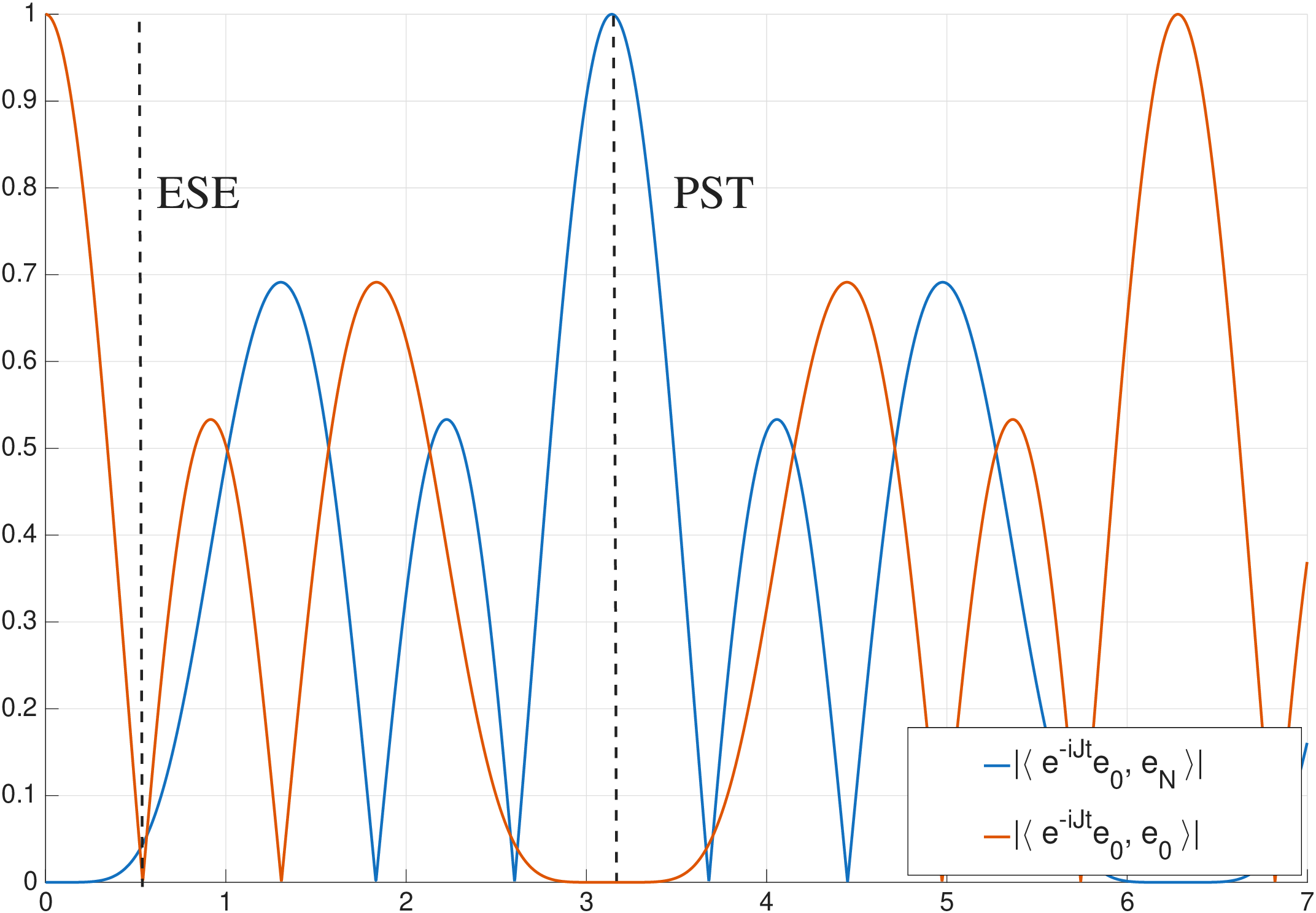}
    \caption{Instance of ESE for $J$ with  spectrum $\{0, \pm 3, \pm 4, \pm 9\}$.}
    \label{fig:enter-label}
\end{figure}

\section*{Acknowledgments}

This work is part of the SIAM-Simons Undergraduate Summer Research Program, which is funded by the Society for Industrial and Applied Mathematics (SIAM) through award 1036702 of the Simons Foundation.
\small

\bibliographystyle{vancouver}
\bibliography{Finaldraft7x7.bib}

\end{document}